\documentclass[twocolumn,final]{IEEEtran}
\def\comment#1{}
\usepackage{graphicx,epsfig,color,cite}
\usepackage{ifthen,calc,graphicx,pgfplots}
\usepackage{bm,lscape,rotating,dsfont,enumerate}
\usepackage{amsfonts,amsmath,amssymb}
\newtheorem{theorem}{Theorem}
\newtheorem{lemma}[theorem]{Lemma}

%
\def\xB{{\mathbf x}}
\def\yB{{\mathbf y}}
\def\XB{{\mathbf X}}
\def\YB{{\mathbf Y}}
\def\AB{{\mathbf A}}
\def\UB{{\mathbf U}}
\def\VB{{\mathbf V}}
\def\RB{{\mathbf R}}
\def\SigmaB{{\mathbf \Sigma}}
\def\sB{{\mathbf s}}

\def\Cn{{\mathbb C}}

\begin{document}
\setlength{\parskip}{.02in}

\title{Canonical correlation analysis of high-dimensional data with very small sample support}

\author{Yang~Song, Peter~J.~Schreier, David~Ram\'irez, and Tanuj Hasija
\thanks{Y.~Song, P.~J.~Schreier, and T. Hasija  are with the Signal and System Theory Group, University of Paderborn, 33098 Paderborn, Germany (e-mail: \{yang.song, peter.schreier, tanuj.hasija\}@sst.upb.de). D.~Ram\'irez is with the Dpto. de Teor\'{i}a de la Se\~{n}al y Comunicaciones, Universidad Carlos III de Madrid, 28911 Legan\'{e}s, Spain (e-mail: ramirezgd@tsc.uc3m.es).}
\thanks{This research was supported by the German Research Foundation (DFG) under grant SCHR 1384/3-1, and the Alfried Krupp von Bohlen und Halbach foundation under its program ``Return of German scientists from abroad.''}
}

\maketitle
\begin{abstract}
This paper is concerned with the analysis of correlation between two high-dimensional data sets 
when there are only few correlated signal components but the number of samples is very small, possibly much smaller than the dimensions of the data. In such a scenario, a principal component analysis (PCA) rank-reduction preprocessing step is commonly performed before applying canonical correlation analysis (CCA). We present simple, yet very effective approaches to the {\em joint} model-order selection of the number of dimensions that should be retained through the PCA step {\em and} the number of correlated signals. These approaches are based on reduced-rank versions of the Bartlett-Lawley hypothesis test and the minimum description length information-theoretic criterion. Simulation results show that the techniques perform well for very small sample sizes even in colored noise.
\end{abstract}
\begin{keywords}
Bartlett-Lawley statistic, canonical correlation analysis, model-order selection, principal component analysis, small sample support.
\end{keywords}

\section{Introduction}
\label{intro}

Correlation analysis based on only small sample support is a challenging task yet with important applications in areas as diverse as biomedicine (e.g. \cite{Lin2006,Correa2010}), climate science (e.g.~\cite{Wallace1992,Shabbar2004}), array processing (e.g. \cite{GeICASSP09}), and others. In this paper, we look at the scenario where the data sets have large dimensions but there are only few correlated signal components. Probably the most common way of analyzing correlation between two data sets is canonical correlation analysis (CCA) \cite{HotellingBio36}. In CCA, the observed data $\xB \in \Cn^n$ and $\yB \in \Cn^m$ are transformed into $p$-dimensional internal (latent) representations ${\bf a} = {\bf S}\xB$ and ${\bf b} = {\bf T}\yB$, where $p = \min(n,m)$, using linear transformations described by the matrices ${\bf S} \in \Cn^{p \times n}$ and ${\bf T} \in \Cn^{p \times m}$. The key idea is to determine ${\bf S}$ and ${\bf T}$ such that most of the correlation between $\xB$ and $\yB$ is captured in a low-dimensional subspace.  

CCA proceeds as follows. First two vectors (``projectors'') ${\bf s}_1 \in \Cn^n$ and ${\bf t}_1 \in \Cn^m$ are determined such that the absolute value of the scalar correlation coefficient $k_1$ between the internal variables $a_1 = {\bf s}_1^T \xB$ and $b_1 = {\bf t}_1^T \yB$ is maximized. The internal variables $(a_1, b_1)$ constitute the first pair of {\em canonical variables}, and $k_1$ is called the first {\em canonical correlation (coefficient)}. The next pair of canonical variables $(a_2, b_2)$ maximizes the absolute value of the scalar correlation coefficient $k_2$ (the second canonical correlation) between $a_2 = {\bf s}_2^T \xB$ and $b_2 = {\bf t}_2^T \yB$, subject to the constraint that they are to be uncorrelated with the first pair. A total of $p$ correlations is determined in this manner, and ${\bf S} = [{\bf s}_1, ..., {\bf s}_p]^T$, ${\bf T} = [{\bf t}_1, ..., {\bf t}_p]^T$. CCA can be performed via the singular value decomposition of the coherence matrix \cite{ScharfSPT0300}
\begin{equation}
\RB_{xx}^{-1/2} \RB_{xy} \RB_{yy}^{-1/2} = {\bf F} {\bf K} {\bf G}^H,\label{equ_CCA}
\end{equation}
where $\RB_{xy}$ is the cross-covariance matrix between $\xB$ and $\yB$, and $\RB_{xx}$ and $\RB_{yy}$ are the auto-covariance matrices of $\xB$ and $\yB$. The canonical correlations $0 \leq k_i \leq 1$ are the singular values, which are the diagonal elements of the diagonal matrix ${\bf K}$. The transformations that generate the latent representations ${\bf a}$ and ${\bf b}$ are then described by ${\bf S} = {\bf F}^H \RB_{xx}^{-1/2}$ and ${\bf T} = {\bf G}^H {\bf R}_{yy}^{-1/2}$. 

In practice, we do not know the covariance matrices and must estimate them from samples. If CCA is performed based on sample covariance matrices, it leads to sample canonical correlations $\hat{k}_i$. If the number of samples $M$ is not significantly larger than the dimensions $m$ and $n$, these $\hat{k}_i$'s can be extremely misleading as they are generally substantially overestimated. Indeed, if $M < m+n$ then $m+n - M$ sample canonical correlations are always identically one, which means that they do not carry any information at all about the true population canonical correlations \cite{PezeshkiACSSC04}. In order to avoid this, we perform a dimension-reduction preprocessing step before applying CCA. The most common type of preprocessing is principal component analysis (PCA). That is, instead of applying (\ref{equ_CCA}) directly to the sample covariance matrices, we first extract a reduced number $r_x$ of components from $\xB$ that account for a large fraction of the total variance in $\xB$. Similarly, we extract $r_y$ components from $\yB$ that account for a large fraction of the total variance in $\yB$. CCA is then performed on the components extracted from $\xB$ and $\yB$. The necessity of a PCA step preceding CCA for small sample sizes was shown in \cite{NadakuditiSSP11} using random matrix theory tools. The paper \cite{NadakuditiSSP11}, however, did not answer the critical question of how to determine $r_x$ and $r_y$ such that the estimated $\hat{k}_i$'s best reflect the true population canonical correlations $k_i$. 

At the same time, a key question in any correlation analysis is how many correlated signals there are. If we had access to the population canonical correlations, we could simply count the number of nonzero $k_i$'s. Since we don't, we need to estimate the number $d$ of correlated signals from the estimated $\hat{k}_i$'s. This is a model-order selection problem. In this paper, we present approaches to {\em jointly} determine, for a PCA-CCA setup, the ranks $r_x$ and $r_y$ of the PCA step and the number $d$ of correlated signals based on extremely small sample support, with $M$ possibly less or even substantially less than $m + n$. These approaches rely on the fact that, while $m$ and $n$ may be very large, the number of correlated signals $d$ is often small. However, a complicating factor of the PCA-CCA setup is that PCA is designed to extract components that account for most of the variance {\em within one} data set, but these components are not necessarily the ones that account for most of the correlation {\em between two} data sets.

In the literature, most of the work on model-order selection deals with either (i) determining the number of signals in a single data set \cite{WaxASSPT0485,NadakuditiSPT0708,Lu2015} or (ii) the number of correlated signals between two data sets, but without a PCA step \cite{BartlettBio41,LawleyBio59,FujikoshiBiometrika79,ZhangSPT0493,ChenIET-RSN1096,GundersonJMA97,StoicaSPT0496}. There is only little work on the {\em joint} model-order selection in a PCA-CCA setup, most of which is rather ad hoc  \cite{Zwick1986a,Hwang2013} and only \cite{Roseveare2015} presents a systematic approach. However, none of these joint PCA-CCA techniques works in the sample-poor case. In the absence of any methodical approach in the sample-poor regime, it is common to use very simple rules of thumb such as ``choose the PCA ranks such that a certain percentage (e.g., 70\%) of the total variance/energy in each data set is retained'' (see, e.g., \cite{Wallace1992}). Needless to say, such rules based on experience only work for specific scenarios.

In general, there are two main approaches to model-order selection: hypothesis tests and information-theoretic criteria. {\em Hypothesis tests} \cite{BartlettBio41,LawleyBio59} are usually series of binary generalized likelihood ratio tests (GLRTs). Starting at $s = 0$, they test whether the model has order $s$ (the null hypothesis) or order greater than $s$ (the alternative). If the null hypothesis is rejected, $s$ is incremented and a new test is run. This proceeds until the null hypothesis is not rejected or the maximum model order is reached. The disadvantage of hypothesis tests is that they require the subjective selection of a probability of false alarm. This can be avoided by using {\em information-theoretic criteria} (ICs) (e.g., \cite{WaxASSPT0485}), which  compute a score as a function of model order. This score is the difference between the likelihood for the observed data, which measures how well the model fits the observed data, and a penalty function. With increasing order there is an increasing number of free parameters, and so the model fit becomes better. In order to avoid overfitting, complex models are penalized by the penalty function, which increases with model order. The best trade-off is achieved when the difference of likelihood and penalty function is maximized. It should be noted that the GLRT and IC methods for model-order selection are actually closely linked \cite{StoicaSPL1110}---a fact that we will exploit, as well.

In this paper, we present approaches to the joint model-order selection in a PCA-CCA setup based on reduced-rank versions of both the Bartlett-Lawley hypothesis test and the minimum description length (MDL) IC \cite{WaxASSPT0485}. As far as we know, these are currently the only techniques capable of handling the combined PCA-CCA approach in the sample-poor regime. An early version of this paper was presented at ICASSP 2015 \cite{SongICASSP15}. 

{
 We would also like to contrast our work with so-called {\em sparse CCA} (e.g., \cite{SparseCCA1,SparseCCA2}). In sparse CCA, a sparsity constraint is placed on the projectors $\sB_i$ and ${\bf t}_i$, which means that each canonical variable $a_i$ or $b_i$ is a linear combination of only a few components in $\xB$ and $\yB$, respectively. While sparse CCA was not proposed to deal with the sample-poor scenario, in principle it can be used as an alternative to PCA-CCA {\em if} there is a priori information that the projectors are sparse. However, in many scenarios of interest (e.g., the applications in biomedicine, climate science, and array processing cited above) there is no justification to assume sparse projectors. When applied to non-sparse problems, sparse CCA will not work well.}

Our program for this paper is as follows. In Section \ref{sec:probform}, we formulate the problem and illustrate the issues that arise when performing CCA based on very small sample sizes and how a combined PCA-CCA approach can address these. We present our approaches based on the hypothesis test in Section \ref{sec:hyptest} and based on the MDL-IC in Section \ref{sec:mdl}. Extensive simulation results are shown in Section \ref{sec:sim}.

\section{Problem Formulation}
\label{sec:probform}

We observe $M$ independent and identically distributed (i.i.d.) sample pairs $\xB_i \in \Cn^n$, $\yB_i \in \Cn^m$ that are drawn from the two-channel measurement model
\begin{align}
{\bf x}&= {\bf A}_x {\bf s}_x + {\bf n}_x, \nonumber \\
{\bf y}&= {\bf A}_y {\bf s}_y + {\bf n}_y. \label{eq:model}
\end{align}
The signals $\sB_x \in \Cn^{d + f_x}$ and $\sB_y \in \Cn^{d + f_y}$ are jointly Gaussian with zero means and cross-covariance matrix
\[ \RB_{s_xs_y} =
\left[\begin{array}{cc}
{\bf diag}(\rho_1\sigma_{x,1}\sigma_{y,1},\ldots,\rho_d\sigma_{x,d}\sigma_{y,d}) & {\bf 0}_{d\times f_y} \\
{\bf 0}_{f_x\times d} & {\bf 0}_{f_x\times f_y}
\end{array}\right], \]
where $\sigma_{x,i}$ is the unknown standard deviation of signal component $s_{x,i}$, $\sigma_{y,i}$ the unknown standard deviation of signal component $s_{y,i}$, and $\rho_i$ the unknown correlation coefficient between $s_{x,i}$ and $s_{y,i}$. Hence, the first $d$ components of $\sB_x$ and $\sB_y$ are correlated, whereas the next ($f_x$, $f_y$) components are independent between $\sB_x$ and $\sB_y$. The correlated components may be stronger or weaker than the independent components. Without loss of generality, we assume the auto-covariance matrices $\RB_{s_xs_x}$ and $\RB_{s_ys_y}$ to be diagonal. The matrices $\AB_x \in \Cn^{n \times (d+f_x)}$ and $\AB_y \in \Cn^{m \times (d+f_y)}$ as well as the dimensions $d$, $f_x$, and $f_y$ are deterministic but unknown. Without loss of generality, $\AB_x$ and $\AB_y$ are assumed to have full column-rank. With all these assumptions, $|\rho_i|$ is the $i$th canonical correlation coefficient $k_i$ between $\sB_x$ and $\sB_y$. The noise vectors ${\bf n}_x \in \Cn^n$ and ${\bf n}_y \in \Cn^m$ are independent of each other, independent of the signals, zero-mean Gaussian, and with {\em unknown} (arbitrary) covariance matrices. 

Compared to the dimensions $m$ and $n$ (which may be very large), we assume that there are only few correlated signals and only few independent signals with variance larger than the correlated signals (but there can be many independent signals with variance smaller than the correlated signals). {
However, because we {\em do not} assume that the mixing matrices $\AB_x$ and $\AB_y$ in (\ref{eq:model}) are sparse, the cross-covariance matrix $\RB_{xy}$ between the observed vectors $\xB$ and $\yB$ is {\em not sparse} and sparse CCA is generally not suitable for this scenario.}

\begin{figure}
\begin{center}
\includegraphics{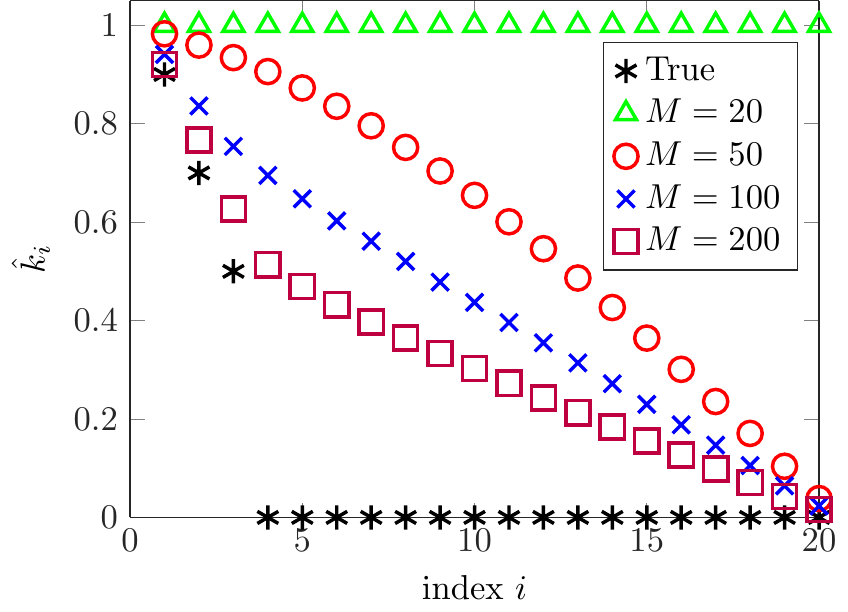}
\caption{Sample canonical correlation coefficients $\hat{k}_i$ for different sample sizes $M$, averaged over $1000$ runs. The are three nonzero population canonical correlations, which are $0.9$, $0.7$, and $0.5$, depicted as $*$. In all cases shown, the true $k_i$'s are significantly overestimated.}
\label{fig:wrongCCs}
\end{center}
\end{figure}

We collect the $M$ sample pairs in data matrices $\XB = [\xB_1, ..., \xB_M]$ and $\YB = [\yB_1, ..., \yB_M]$, from which we compute the sample covariance matrices $\hat{\RB}_{xx} = \XB\XB^H/M$, $\hat{\RB}_{yy} = \YB\YB^H/M$, and $\hat{\RB}_{xy} = \XB\YB^H/M$. In the case of small sample support, the sample canonical correlations $\hat{k}_i$, $i = 1, ..., p$, $p = \min(n,m)$, computed from the sample covariance matrices can be extremely misleading. It has been shown in \cite{PezeshkiACSSC04} that when $M < m+n$, at least $m+n - M$ sample canonical correlations will be identically one regardless of the two-channel model that generates the data samples. In such a small sample scenario, the $\hat{k}_i$'s cannot be used to infer the number of correlated signals. But even in the case with $M$ greater (but not substantially greater) than $m+n$ the sample $\hat{k}_i$'s are generally significantly overestimated. This is shown in Fig. \ref{fig:wrongCCs}, which displays the sample canonical correlations for a model of dimension $m = n = 20$, with $d = 3$ correlated components and $f_x = f_y = 0$ independent components for different sample sizes $M$. Even for $M = 200$, where the number of samples is ten times the dimension of the system, the $\hat{k}_i$'s for $i \geq 4$ are quite wrong, and it is impossible from visual inspection to determine the number of correlated components.

This motivates the use of a rank-reduction preprocessing step. The most common type of preprocessing is PCA, and a combined PCA-CCA approach is the setup that we consider in our paper. So let us investigate what effect rank reduction has on the estimated canonical correlations. The PCA step retains those $r_x$ and $r_y$ components in $\XB$ and $\YB$, respectively, that account for most of their total variance. These components can be computed as follows. We first determine the singular value decompositions (SVDs) of the data matrices $\XB = \UB_x \SigmaB_x \VB_x^H$ and $\YB = \UB_y \SigmaB_y \VB_y^H$. Then the reduced-rank PCA descriptions of $\XB$ and $\YB$ are
\begin{align}
\XB_{r_x} & = \UB_x^H(:,1:r_x) \XB \in \Cn^{r_x \times M}, \nonumber \\
\YB_{r_y} & = \UB_y^H(:,1:r_y) \YB \in \Cn^{r_y \times M}, \label{equ:XrYr}
\end{align}
where ${\bf U}_x(:,1:r_x)$ denotes the $n\times r_x$ matrix containing the first $r_x$ columns of $\UB_x$, which are associated with the largest $r_x$ singular values of ${\bf X}$, and ${\bf U}_y(:,1:r_y)$ denotes the $m\times r_y$ matrix containing the first $r_y$ columns of $\UB_y$, which are associated with the largest $r_y$ singular values of ${\bf Y}$. Now let $\widetilde{\RB}_{xx} = \XB_{r_x}\XB^H_{r_x}/M$, $\widetilde{\RB}_{yy} = \YB_{r_y}\YB_{r_y}^H/M$, and $\widetilde{\RB}_{xy} = \XB_{r_x}\YB_{r_y}^H/M$ be the sample covariance matrices from the reduced-dimensional PCA descriptions. The corresponding estimated canonical correlations $\hat{k}_i(r_x,r_y)$ may be computed as the singular values of the reduced-dimensional sample coherence matrix, which is \cite{PezeshkiACSSC04}
\begin{align}
& {\bf{\widetilde R}}_{xx}^{-1/2} {\bf{\widetilde R}}_{xy} {\bf{\widetilde R}}_{yy}^{-1/2} \nonumber \\
=&\UB^H_x(:,1:r_x) \UB_x \left(\SigmaB_x\SigmaB_x^H\right)^{-1/2} \UB_x^H \UB_x(:,1:r_x) \nonumber \\
&\times \UB^H_x(:,1:r_x) \UB_x \SigmaB_x \VB_x^H \VB_y \SigmaB_y^H \UB_y^H \UB_y(:,1:r_y) \nonumber \\
&\times \UB^H_y(:,1:r_y) \UB_y \left(\SigmaB_y\SigmaB_y^H\right)^{-1/2} \UB_y^H \UB_y(:,1:r_y) \nonumber \\
=&\left[{\bf I}_{r_x},{\bf 0}_{r_x\times(m-r_x)}\right] \left(\SigmaB_x\SigmaB_x^H\right)^{-1/2} \left[{\bf I}_{r_x},{\bf 0}_{r_x\times(m-r_x)}\right]^H \nonumber \\
&\times \left[{\bf I}_{r_x},{\bf 0}_{r_x\times(m-r_x)}\right] \SigmaB_x \VB_x^H \VB_y \SigmaB_y^H \left[{\bf I}_{r_y},{\bf 0}_{r_y\times(m-r_y)}\right]^H \nonumber \\
&\times \left[{\bf I}_{r_y},{\bf 0}_{r_y\times(m-r_y)}\right] \left(\SigmaB_y\SigmaB_y^H\right)^{-1/2} \left[{\bf I}_{r_y},{\bf 0}_{r_y\times(m-r_y)}\right]^H \nonumber \\
=&
\SigmaB_x^{-1}(1:r_x,1:r_x)\left[\SigmaB_x(1:r_x,1:r_x),{\bf 0}_{r_x\times(M-r_x)}\right]\VB_x^H \VB_y
\nonumber \\
& \times\left[\SigmaB_y(1:r_y,1:r_y),{\bf 0}_{r_y\times(M-r_y)}\right]^H\SigmaB_y^{-1}(1:r_y,1:r_y) \nonumber\\
=&
\left[{\bf I}_{r_x},{\bf 0}_{r_x\times(M-r_x)}\right]
\VB_x^H \VB_y
\left[{\bf I}_{r_y},{\bf 0}_{r_y\times(M-r_y)}\right]^H \nonumber\\
=&
\VB_x^H(:,1:r_x) \VB_y(:,1:r_y).\label{equ:VxVy}
\end{align}
The thus computed canonical correlations $\hat{k}_i(r_x,r_y)$, $i = 1, ..., r$, $r = \min(r_x,r_y)$, depend on the ranks $r_x$ and $r_y$. As seen in (\ref{equ:VxVy}), the $i$th estimated canonical correlation $\hat{k}_i(r_x,r_y)$ can be found as the $i$th largest singular value of ${\bf V}^H_x(:,1:r_x) {\bf V}_y(:,1:r_y)$, where $\VB_x$ and $\VB_y$ are the matrices of right singular vectors of $\XB$ and $\YB$, respectively. To avoid defective unit  canonical correlations, we must choose $r_x+r_y \leq M$ and $\max(r_x,r_y) \leq p$. This, however, does not tell us what the optimum choices for $r_x$ and $r_y$ are such that the $\hat{k}_i(r_x,r_y)$'s are as close to the true canonical correlations as possible. 

Intuitively, it seems that $r_x$ and $r_y$ should be chosen large enough to capture as much of the correlated signal components as possible without including too much noise. If the correlated components are weaker than some of the independent components, this will inevitably mean that the PCA preprocessing step must also keep those stronger independent components. On the other hand, if the correlated components are also the strongest components, it would be better if the PCA step got rid of the independent components. Hence, without noise, $r_x$ would ideally be chosen between $d$ and $d+f_x$, and $r_y$ between $d$ and $d+f_y$. With noise, the ranks $r_x$ and $r_y$ may also fall outside of these ranges, depending on the properties of the noise and the relative strengths of the signals.

It can be shown using Cauchy's interlacing theorem (the result is presented as Lemma \ref{lemma_interlace} in Appendix \ref{app:interlace}) that increasing the ranks of the PCA steps increases {\em every} estimated canonical correlation coefficient. Hence, choosing too large an $r_x$ or $r_y$ will lead to estimated canonical correlations that are greater, possibly significantly greater, than the true canonical correlations. On the other hand, if $r_x$ and $r_y$ are not large enough, then the rank-reduced representation does not contain all of the correlated components, and thus the estimated canonical correlations can be too small. 

\begin{figure}
\begin{center}
\includegraphics{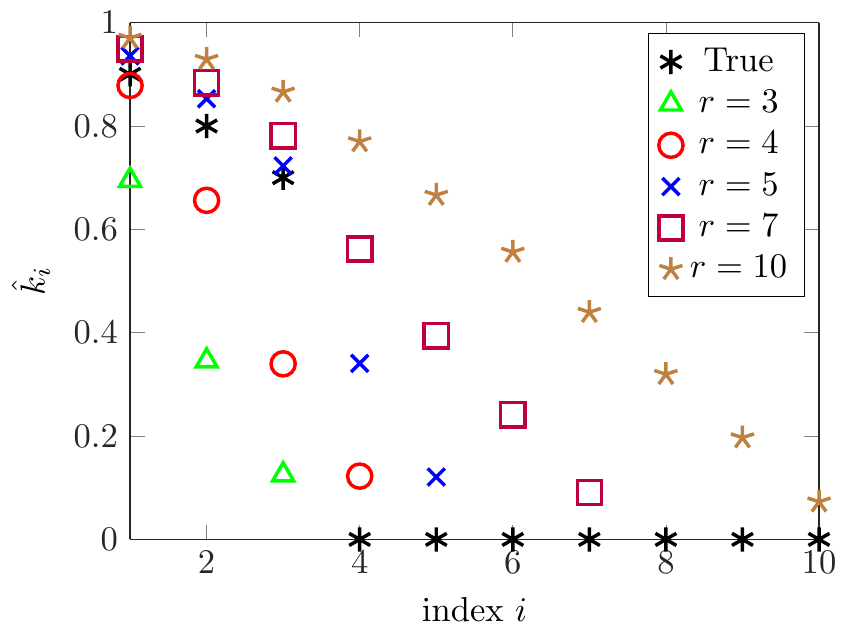}
\caption{Effect of rank reduction on the estimated canonical correlations $\hat{k}_i(r)$, averaged over $1000$ runs. The are $d = 3$ correlated signal components with population canonical correlation coefficients $0.9$, $0.8$, and $0.7$ (depicted as $*$), and $f_x = f_y = 2$ stronger independent signal components.  For $r > 5$, the canonical correlation coefficients are all overestimated. For $r < 5$, the nonzero coefficients are underestimated. The ranks of the PCA steps for $\xB$ and $\yB$ are the same: $r = r_x = r_y$.}
\label{fig_cck}
\end{center}
\end{figure}

These considerations can be illustrated by the following example, where $M = 30$ and $m = n = 20$. There are $d = 3$ correlated signals (each with variance 1.5) and $f = f_x = f_y = 2$ independent signals (each with variance 5). Since the independent signals are stronger than the correlated signals (and the numbers $f_x$ and $f_y$ of independent signals in $\sB_x$ and $\sB_y$ are identical), we would expect $r_x = r_y = d + f = 5$ to be the optimum rank for the PCA step. Indeed, Fig. \ref{fig_cck} shows that choosing $r = r_x = r_y$ greater than 5 leads to $\hat{k}_i$'s that are too large, whereas $r$ less than 5 leads to $\hat{k}_i$'s that are too small. While the exact relationships depend on the variances of signal and noise components and the correlation coefficients, the principle observed here generalizes to other settings.


\section{Order selection based on hypothesis test}
\label{sec:hyptest}

\subsection{Traditional test}

In the case of sufficient number of samples, the traditional hypothesis test \cite{BartlettBio41,LawleyBio59} for determining the number $d$ of correlated components between $\xB$ and $\yB$ is a series of binary hypothesis tests. Starting with $s = 0$, it tests the null hypothesis $H_0$: $d = s$ versus the alternative hypothesis $H_1$: $d > s$. If $H_0$ is rejected, $s$ is incremented and a new test is run. This proceeds until $H_0$ is not rejected or $s = p = \min(n,m)$ is reached. 

The binary test in \cite{BartlettBio41,LawleyBio59} is a generalized likelihood ratio test (GLRT) of $H_0$ vs. $H_1$. For a given number $s$ of correlated signals, let $\Omega_s$ denote the parameter space of the model, which consists of the  auto- and cross-covariance matrices. The maximum value of the log-likelihood function for a given number $s$ of correlated signals, maximized over the parameter space $\Omega_s$, is \cite{StoicaSPT0496}
\begin{equation}
\ell_{\max}(\XB,\YB|\Omega_s) = -M \ln \prod_{i=1}^s \left(1 - \hat{k}_i^2 \right).
\end{equation}
Canonical correlation coefficients close to 1 are strong evidence of correlation between $\xB$ and $\yB$ and thus lead to large $\ell_{\max}$. Now let $\Omega_{d > s}$ denote the parameter space of all models where the assumed number of correlated signals $d$ is greater than $s$. The generalized log-likelihood ratio for testing $H_0$ vs. $H_1$ is \cite{ChenIET-RSN1096}
\begin{align}
\Lambda(n,m,s) & = \ell_{\rm max}({\bf X},{\bf Y}|\Omega_{d=s})-\ell_{\rm max}({\bf X},{\bf Y}|\Omega_{d>s}) \nonumber \\
& = \ell_{\rm max}({\bf X},{\bf Y}|\Omega_{s})-\ell_{\rm max}({\bf X},{\bf Y}|\Omega_{p}) \nonumber \\
& = M \ln \prod_{i=s+1}^p \left(1 - \hat{k}_i^2\right), \label{equ:GLRT}
\end{align}
where the second identity follows from the fact that the maximum of the likelihood function, under the constraint $d > s$, occurs when the model has the most degrees of freedom, i.e., for $d = p = \min(n,m)$. The cross-covariance matrix has $N_\Omega(n,m,s) = 2s(m+n-s)$ degrees of freedom \cite{StoicaSPT0496}. Wilks' theorem \cite{Wilks1938} says that $-2 \Lambda(n,m,s)$ is asymptotically (as $M \rightarrow \infty$) $\chi^2$-distributed with degrees of freedom equal to the difference of the dimensions of the parameter spaces $\Omega_p$ and $\Omega_s$:\footnote{In this difference, only the degrees of freedom associated with the cross-covariance matrix matter.} 
\begin{align}
N_\Lambda(n,m,s) & = N_\Omega(n,m,p) - N_\Omega(n,m,s) \nonumber \\
& = 2p(m + n - p) - 2s(m + n - s) \nonumber \\
& = 2(m - s)(n - s)
\end{align} 
For finite $M$, the closeness of the $\chi^2$-approximation may be improved by replacing $-2\Lambda(n,m,s)$ with the Bartlett-Lawley statistic \cite{BartlettBio41,LawleyBio59}
\begin{align}
C(n,m,s) = & - \! 2 \left( M-s-\frac{m+n+1}{2}+\sum_{i=1}^{s} \hat{k}_i^{-2} \right) \nonumber \\
& \times \ln\prod_{i=s+1}^{p}\left(1-\hat{k}_i^2\right).\label{cs1}
\end{align}
This correction makes the moments of the test statistic equal to the moments of the $\chi^2$-distribution. As long as $M$ is large compared to $m$ and $n$, the statistic $C(n,m,s)$ is generally very close to a $\chi^2$-distribution. {
 Note that this is independent of the covariance matrix of the noise, since it is not used anywhere in the derivation.} This allows computation of a test threshold $T(n,m,s)$ for a given probability of false alarm.

\subsection{Test with PCA preprocessing}

Instead of running the test directly on $\XB$ and $\YB$, we would like to apply the test to the reduced-rank PCA descriptions $\XB_{r_x}$ and $\YB_{r_y}$ obtained in (\ref{equ:XrYr}). By performing PCA on $\xB$ and $\yB$, we create a new reduced-rank two-channel model:
\begin{align}
{\bf x}_{r_x}& ={\bf U}^H_x(:,1:r_x){\bf x} \nonumber \\
&={\bf U}^H_x(:,1:r_x){\bf A}_x {\bf s}_x + {\bf U}^H_x(:,1:r_x){\bf n}_x \nonumber \\
&=\widetilde{{\bf A}}_x {\bf s}_x + \widetilde{{\bf n}}_x, \nonumber\\
{\bf y}_{r_y}& ={\bf U}^H_y(:,1:r_y){\bf y} \nonumber \\ &={\bf U}^H_y(:,1:r_y){\bf A}_y {\bf s}_y + {\bf U}^H_y(:,1:r_y){\bf n}_y \nonumber \\
&=\widetilde{{\bf A}}_y {\bf s}_y + \widetilde{{\bf n}}_y. \label{eq:model2}
\end{align}
In this model, the new matrices $\widetilde{\AB}_x$ and $\widetilde{\AB}_y$ have full rank because $\AB_x$ and $\AB_y$ are assumed to have full rank. With the PCA preprocessing the GLRT statistic is
\begin{equation}
\Lambda(r_x,r_y,s) = M \ln \prod_{i=s+1}^r \left(1 - \hat{k}_i^2(r_x,r_y) \right), \label{equ:Lambda}
\end{equation}
and the Bartlett-Lawley statistic is
\begin{align}
C(r_x,r_y,s) = & - \! 2\left( M-s-\frac{r_x+r_y+1}{2}+\sum_{i=1}^{s} \hat{k}_i^{-2}(r_x,r_y) \right)  \nonumber \\
& \times  \ln\prod_{i=s+1}^{r}\left(1-\hat{k}^2_i(r_x,r_y)\right)  \label{cs2}
\end{align}
for $s=0,\ldots,r-1$ with $r = \min(r_x,r_y)$. The challenge in the reduced-rank version of the hypothesis test is thus to {\em jointly} determine the best ranks $r_x,r_y$ of the PCA steps and the number $d$ of correlated signals. As long as the number of samples $M$ is large compared to the minimum PCA dimension $r = \min(r_x,r_y)$ but $r_x$ and $r_y$ are not too small (which we will explain in the next paragraph), the new test statistic $C(r_x,r_y,d)$ under $H_0: d = s$ is still approximately $\chi^2$-distributed with $2(r_x - d)(r_y - d)$ degrees of freedom. We denote by $r_{\max}$ the largest $r$ for which the $\chi^2$-distribution holds well enough. Of course, requiring $M$ to be large with respect to $r$ is a much more relaxed condition than requiring $M$ to be large with respect to the dimensions $n$ and $m$. This is because $r_x$ and $r_y$ do not have to be chosen greater (unless there are strong noise components) than $d + f_x$ and $d + f_y$, respectively, which are usually much smaller than $n$ and $m$. 

There is, however, a complication. By applying PCA to $\xB$ and $\yB$, we might eliminate some of the correlated components if the PCA ranks $r_x$ and $r_y$ are not chosen large enough. If this is the case, then the number of correlated components $\tilde{d}$ in the {\em reduced-rank descriptions} $\xB_{r_x}$ and $\yB_{r_y}$ will be {\em smaller} than the number of correlated components $d$ between $\xB$ and $\yB$. As a consequence, $C(r_x,r_y,d)$ will no longer resemble a $\chi^2$-distribution. Instead, $C(r_x,r_y,\tilde{d})$ with $\tilde{d} < d$ will now be approximately $\chi^2$. By choosing $r_x$ and $r_y$ not large enough it thus becomes likely that the null hypothesis ``there are $\tilde{d}$ correlated signals'' is not rejected, thus deciding for a smaller number $\tilde{d}$ than the true $d$.

We are now getting closer to writing down a rule for jointly selecting $r_x$, $r_y$, and $d$. In order to motivate this rule, we summarize the preceding discussion: Provided the PCA ranks $r_x$ and $r_y$ are chosen sufficiently large to capture all correlated components while $r$ is still small compared to $M$, i.e., $r \leq r_{\max}$, the statistic $C(r_x,r_y,d)$ in (\ref{cs2}) is approximately $\chi^2$ {
(again irrespective of the noise covariance matrix)}. This means that in a series of binary tests of $H_0 : d = s$ vs. $H_1 : d > s$ (testing all values of $s$ starting from $0$ until $H_0$ is not rejected or the maximum $s = r_{\max}$ is reached) $d$ would generally not be {\em overestimated}. It is likely, however, to be {\em underestimated}, if $r_x$ and $r_y$ are not chosen large enough. If $r_x$ and $r_y$ are too small, then the reduced-rank PCA descriptions do not capture all of the correlated components and thus the series of binary tests would decide for too small a $d$. This reasoning motivates the following decision rule.

\noindent {\bf Detector 1 (``max-min detector''):} {\em Choose
\begin{equation}
{\hat d} = \!\!\!
\underset{
\{r_x,r_y\}=1,\ldots,r_{\max}
}{\max}~
\underset{s=0,\ldots,r-1}{{\min}}\left\{s:C(r_x,r_y,s) < T(r_x,r_y,s)\right\} \label{equ:maxmin-det}
\end{equation}
and choose the $r_x$ and $r_y$ that lead to $\hat{d}$ as the PCA ranks.} In (\ref{equ:maxmin-det}) the ${\min}$-operator chooses the smallest $s$ such that the statistic $C(r_x,r_y,s)$ falls below the threshold $T(r_x,r_y,s)$, which ensures a given probability of false alarm. If there is no such $s$, then it chooses $s = r$. This step is similar to the traditional test, except that $T(r_x,r_y,s)$ depends on $r_x$ and $r_y$. The rule (\ref{equ:maxmin-det}) is based on the fact that if $r_x$ and $r_y$ are not chosen optimally, the min-step might return a number smaller than $d$. Hence, the min-step is performed for all $r_x$ and $r_y$ from 1 up to $r_{\max}$, and the maximum result is chosen as $\hat{d}$.

\subsection{Example}

\begin{figure}
\begin{center}
\includegraphics{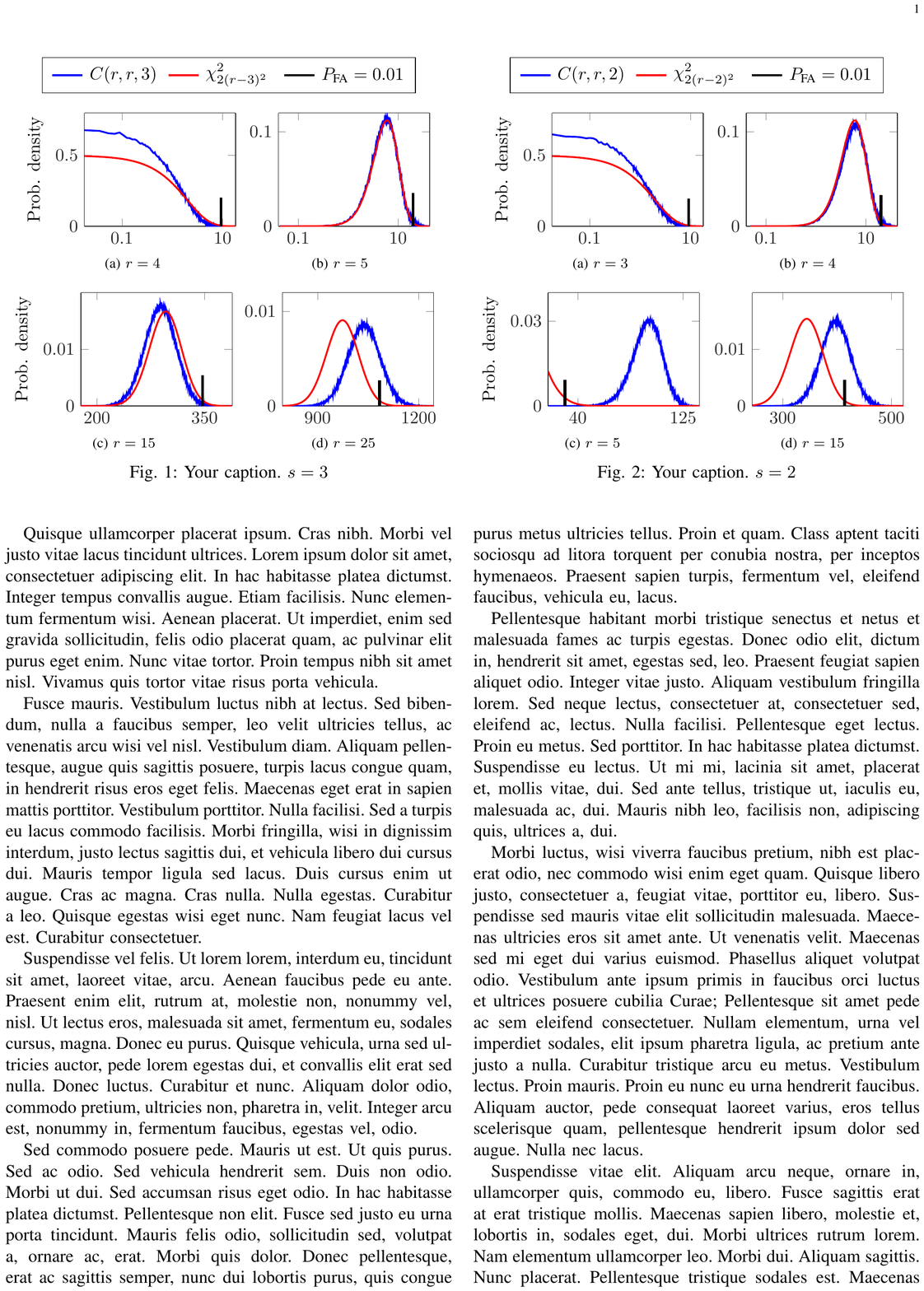}
\caption{Histogram of the test statistic $C(r,r,3)$ (in blue) and the probability density function of a $\chi^2$-distribution with $2(r-3)^2$ degrees of freedom (in red), for $s = d = 3$ and different PCA ranks $r = r_x = r_y$. Histograms are computed from $10^6$ independent trials. Also shown as vertical lines are the thresholds $T(r,r,3)$ for a probability of false alarm $P_{\rm FA} = 0.01$. The horizontal axis uses a logarithmic scale.}
\label{fig:chi2}
\end{center}
\end{figure}

We will use an example to illustrate both the closeness of the $\chi^2$-approximation and the idea of the max-min detector. We consider a scenario with $m = n = 100$, $d = 3$ correlated signals, $f = f_x = f_y = 2$ stronger interfering signals, and $M = 50$ samples. The noise variance is chosen small compared to the signal variances. For $s = d = 3$ and $r = r_x = r_y$, Fig.~\ref{fig:chi2} compares histograms of the statistic $C(r,r,3)$ with the probability density function of a $\chi^2$-distribution with $2(r_x - d)(r_y - d) = 2(r - 3)^2$ degrees of freedom. As long as $r$ is large enough to capture all correlated components (which is the case for $r \geq d + f = 5$ since the independent signals are stronger than the correlated signals) but small compared to $M$, the statistic $C(r,r,3)$ is very well approximated by the $\chi^2$-distribution. This can be seen in subplots (b) $r = 5$ and (c) $r = 15$ (where we start to notice some divergence between the statistic and its approximation). Subplot (d) shows $r = 25$, which is not small enough with respect to $M = 50$. Here the $\chi^2$-distribution is no longer a good approximation of the test statistic.

On the other hand, if $r < 5$ then the PCA step eliminates some correlated components. This can be observed in subplot (a) for $r = 4$, where the histogram of $C(4,4,3)$ does not approximate a $\chi^2$-distribution. Because the PCA steps with $r_x = r_y = 4$ keep the two stronger independent signals and only two of the three weaker correlated signals,
the reduced-rank PCA descriptions $\xB_{r_x}$ and $\yB_{r_y}$ only have $\tilde{d} = 2$ correlated signals rather than $d = 3$. It can be observed in Fig. \ref{fig:notchi2} (b) that $C(4,4,2)$ indeed well approximates a $\chi^2$-distribution with $2(r - \tilde{d})^2 = 2(4 - 2)^2 = 8$ degrees of freedom.

\begin{figure}
\begin{center}
\includegraphics{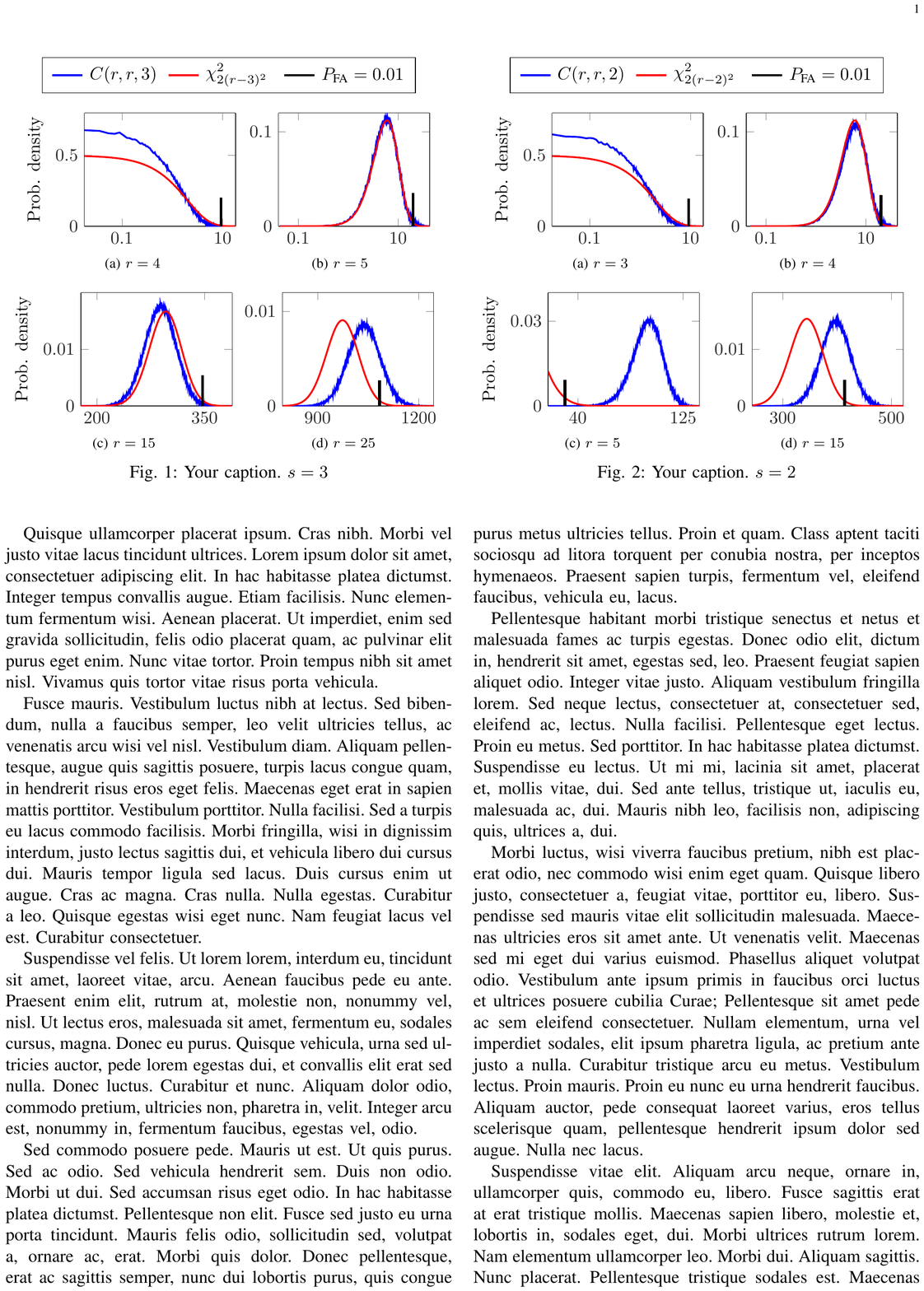}
\caption{Histogram of the test statistic $C(r,r,2)$ (in blue) and the probability density function of a $\chi^2$-distribution with $2(r-2)^2$ degrees of freedom (in red), for $d = 3$ but $s = 2$ and different PCA ranks $r = r_x = r_y$. The vertical lines are the thresholds $T(r,r,2)$ for a probability of false alarm $P_{\rm FA} = 0.01$. The horizontal axis uses a logarithmic scale.}
\label{fig:notchi2}
\end{center}
\end{figure}

So let us look at how the max-min detector would proceed in this example. To illustrate this, we again consider Figs.~\ref{fig:chi2} and \ref{fig:notchi2}, which compare histograms of $C(r,r,s)$ with $\chi^2$-distributions with $2(r - s)^2$ degrees of freedom for $s = d = 3$ (Fig. \ref{fig:chi2}) and $s = 2$ (Fig. \ref{fig:notchi2}). Also shown in these figures are the thresholds $T(r,r,s)$ for a probability of false alarm $P_{\rm FA} = 0.01$. According to (\ref{equ:maxmin-det}), for given $r_x$ and $r_y$, the detector needs to find the minimum $s$ (between $0$ and $r$) such that the statistic $C$ falls below the threshold $T$. Consider first $r_x = r_y = 4$, which is too small because the PCA steps eliminate one correlated component. From Fig. \ref{fig:notchi2} (b), we see that it is likely that $C(4,4,2)$ falls below $T(4,4,2)$, which means that for $r_x = r_y = 4$, the min-step of the detector would likely return too small a number of correlated signals ($s = 2$).\footnote{If we plotted the test statistics and thresholds also for $s = 0$ and $s = 1$ we would see that it is unlikely that a value $s < 2$ would be chosen.} 

Now consider $r_x = r_y = 5$, which is large enough so that the PCA steps capture all correlated components. It can now be observed in Fig. \ref{fig:notchi2} (c) that for $s = 2$, $C(5,5,2)$ will likely not fall below $T(5,5,2)$.\footnote{As before, if we plotted the test statistics also for $s = 0$ and $s = 1$, we would see that it is even less likely that $C(5,5,s)$ falls below $T(5,5,s)$ if $s < 2$.} On the other hand, Fig. \ref{fig:chi2} (b) shows that it is likely that $C(5,5,3)$ falls below $T(5,5,3)$, hence returning $s = 3$ in the min-step of the detector.

Finally, consider $r_x = r_y = 15$, which is larger than needed to capture all correlated components. If $r_x$ and $r_y$ are too large then it becomes increasingly difficult, as can be observed in Fig. \ref{fig_cck}, to distinguish between the sample correlation coefficients that are associated with the correlated signals and those that are not. The min-step of the detector would still not generally overestimate $d$ (because the $\chi^2$-approximation remains valid under $H_0$) but it might {\em underestimate} it. This becomes clear from looking at Fig. \ref{fig:notchi2} (d), which shows that there is a rather high chance that the min-step would select $s = 2$. However, an underestimating min-step is not a problem for the max-min detector because it selects the maximum of all min-step results.

\section{Order selection based on information-theoretic criterion}
\label{sec:mdl}

A disadvantage of the hypothesis testing approach to order selection is the requirement of selecting a probability of false alarm $P_{\rm FA}$. Setting $P_{\rm FA}$ too high will lead to a detector that tends to overfit, setting it too low will generally underfit. Achieving the best performance thus requires the right trade-off. In this section, we present two alternative approaches that do not require the manual selection of a threshold and are based on the minimum description length (MDL)-IC. The first approach will remain a hypothesis test but with automatic $P_{\rm FA}$-selection exploiting a link between the GLRT and the IC for model-order selection. The second approach will be a max-min detector based directly on the MDL-IC.

\subsection{Setting the threshold based on the MDL-IC}

The MDL-IC for selecting the number of correlated signals in two data sets (without PCA steps) is \cite{StoicaSPT0496}
\begin{align}
I_{\rm MDL}(n,m,s)&=
-\ell_{\rm max}(\XB,\YB|\Omega_{s})
+\frac{1}{2}\ln(M)N_\Omega(n,m,s)\nonumber\\
&= M\log\prod_{i=1}^{s}\left(1-\hat{k}_i^{2}\right)+\ln(M)s(m+n-s).\label{eq:itc1}
\end{align}
In this expression, the second term is the penalty term that depends on the degrees of freedom of the model\footnote{As before, only the degrees of freedom associated with the cross-covariance matrix are considered because the degrees of freedom associated with the auto-covariance matrices do not depend on $s$. Hence, they do not matter in the following optimization problems.} and thus penalizes overly complex models. The model order chosen is the value of $s$ for which $I_{\rm MDL}(n,m,s)$ is minimized. The reduced-rank version of (\ref{eq:itc1}), which accounts for the PCA steps, is
\begin{align}
I_{\rm MDL}&(r_x,r_y,s) \nonumber \\ 
& = M\ln\prod_{i=1}^{s}\left(1-\hat{k}_i^{2}(r_x,r_y)\right) +\ln(M)s(r_x+r_y-s).\label{eq:itc2}
\end{align}
As has been noted in \cite{StoicaSPL1110}, there is the following connection between the MDL-IC and the log-likelihood ratio of the reduced-rank GLRT $H_0 : d = s$ vs. $H_1 : d > s$:
\begin{align}
I_{\rm MDL}&(r_x,r_y,r) - I_{\rm MDL}(r_x,r_y,s) \nonumber \\
& = \Lambda(r_x,r_y,s) + \ln(M)N_\Lambda(r_x,r_y,s) \label{equ:equivalence}
\end{align}
with $N_\Lambda(r_x,r_y,s) = (r_x - s)(r_y - s)$. When choosing between model orders $s$ and $r$ based on the MDL-IC, we decide for model order $s$ if $I_{\rm MDL}(r_x,r_y,r) > I_{\rm MDL}(r_x,r_y,s)$. Because of (\ref{equ:equivalence}) we can implement this decision rule also based on the GLRT. We decide for model order $s$ rather than a model order greater than $s$ if
\begin{equation}
\Lambda(r_x,r_y,s) > \underbrace{-\ln(M)(r_x - s)(r_y - s)}_{\displaystyle T_{\rm MDL}(r_x,r_y,s)}. \label{equ:TMDL}
\end{equation}
The term on the right-hand side of this inequality is thus the threshold for the GLRT, which is determined based on the MDL-IC. Note that it is unnecessary to apply the Bartlett-Lawley correction because this would amount to multiplying both sides of the inequality (\ref{equ:TMDL}) with the same factor. Thus, we obtain the following max-min decision rule in terms of $\Lambda(r_x,r_y,s)$ rather than $C(r_x,r_y,s)$.\\
\noindent {\bf Detector 2 (max-min detector with threshold set by MDL-IC):} {\em Choose
\begin{align}
{\hat d} = 
\underset{
\{r_x,r_y\}=1,\ldots,r_{\max}
}{\max}~
\underset{s=0,\ldots,r-1}{{\min}} \{s & :\Lambda(r_x,r_y,s) \nonumber \\
& > T_{\rm MDL}(r_x,r_y,s)\},
\end{align}
where $\Lambda(r_x,r_y,s)$ is given in (\ref{equ:Lambda}) and $T_{\rm MDL}(r_x,r_y,s)$ is given in (\ref{equ:TMDL}), and choose the $r_x$ and $r_y$ that lead to $\hat{d}$ as the PCA ranks.}

\subsection{Min-MDL detector}
\label{sec_minMDL}

Another approach that does not require the selection of $P_{\rm FA}$ applies the max-min idea directly to the MDL-IC. Let us first write down the decision rule and interpret it afterwards. \\
\noindent {\bf Detector 3 (``max-min MDL-IC detector''):} {\em Choose 
\begin{equation}
\hat{d} =  \underset{
\{r_x,r_y\}=1,\ldots,r_{\max}
}{\max}~
\underset{s=0,\ldots,r-1}{{\rm argmin}} I_{\rm MDL}(r_x,r_y,s) \label{equ:maxmin-MDL}
\end{equation}
and choose the $r_x$ and $r_y$ that lead to $\hat{d}$ as the PCA ranks.}

In order to understand this detector, we note, based on the discussion in the preceding subsection, that
\begin{align}
\underset{s=0,\ldots,r-1}{{\rm argmin}} & I_{\rm MDL}(r_x,r_y,s) \nonumber \\
& = \underset{s=0,\ldots,r-1}{{\rm argmax}} [- I_{\rm MDL}(r_x,r_y,s)] \nonumber \\
& = \underset{s=0,\ldots,r-1}{{\rm argmax}} [I_{\rm MDL}(r_x,r_y,r) - I_{\rm MDL}(r_x,r_y,s)] \nonumber \\
& = \underset{s=0,\ldots,r-1}{{\rm argmax}} [\Lambda(r_x,r_y,s) - T_{\rm MDL}(r_x,r_y,s)].
\end{align}
The min-step in Detector 3 thus chooses the value of $s$ that {\em maximizes} the difference between the GLRT statistic and the MDL test threshold. This is different than the min-step in Detector 2, which picks the {\em smallest} $s$ for which the test statistic exceeds the threshold. Therefore, Detector 3 will never pick a $\hat{d}$ smaller, but possibly larger, than Detector 2.

\section{Performance evaluation}
\label{sec:sim}

In this section, we compare the performance of our three model-order selection schemes among each other and with competing approaches. In the absence of a competing systematic approach to the {\em joint} model-order selection in PCA-CCA, we determined the PCA ranks $r_x$ and $r_y$ separately from the number of correlated signals $d$. We used the sample eigenvalue-based (SEV) technique \cite{NadakuditiSPT0708} for selecting $r_x$ and $r_y$ because it is one of the few techniques that can handle the sample-poor case for a single channel. For the selection of $d$ we used the canonical correlation test (CCT) \cite{ChenIET-RSN1096} with $P_{\rm FA} = 0.005$, the Akaike information criterion (AIC) \cite{StoicaSPT0496}, and the MDL criterion \cite{StoicaSPT0496}. 

Figures \ref{fig_sim1}--\ref{fig_sim4} show the probability of selecting the correct $d$ for different setups. In the first setup, shown in Figs. \ref{fig_sim1}--\ref{fig_sim3}, we consider a system with $d = 2$ correlated signals (each with variance $5$ and correlation coefficients $0.8$ and $0.7$), and $f_x = 3$ and $f_y = 4$ independent signals (each with variance $1.5$). The matrices $\AB_x$ and $\AB_y$ are randomly generated unitary matrices. For each data point, we ran 1000 independent Monte Carlo trials.

\begin{figure}
\begin{center}
\includegraphics{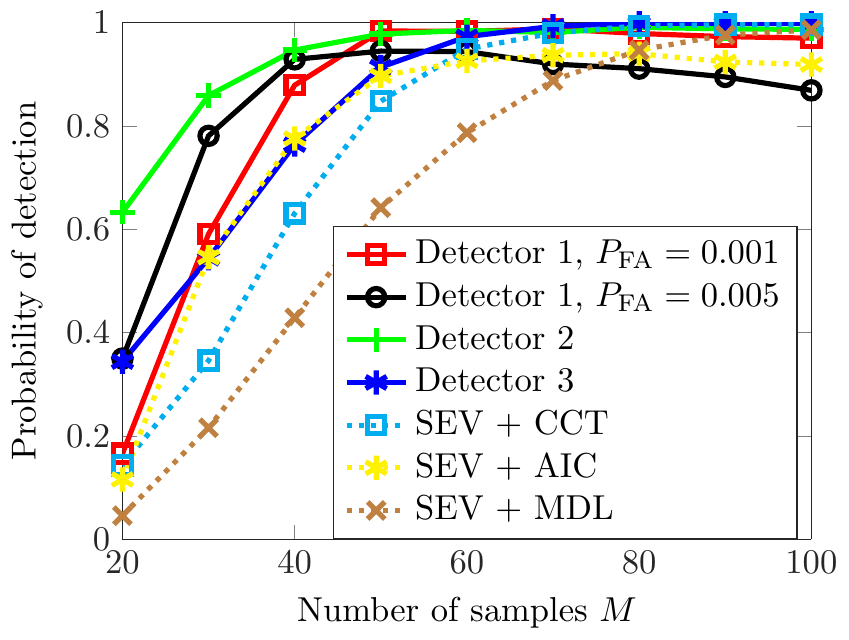}
\caption{Performance of our Detectors 1, 2, 3 and competing approaches for {\em white} noise. System dimensions are $m = n = 40$.}
\label{fig_sim1}
\end{center}
\end{figure}

\begin{figure}
\begin{center}
\includegraphics{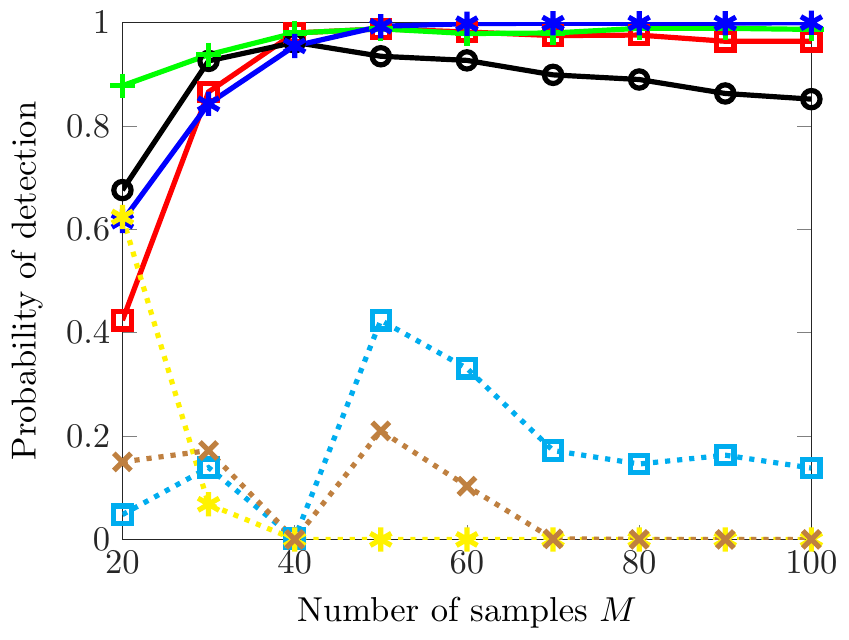}
\caption{Same setup as in Fig. \ref{fig_sim1} but with {\em colored} MA noise. For the meaning of the colored markers, please refer to the legend of Fig. \ref{fig_sim1}.}
\label{fig_sim2}
\end{center}
\end{figure}

We first consider a system with fixed dimension $m = n = 40$. In Fig. \ref{fig_sim1}, we show the performance as a function of the number of samples $M$ when the noise is white and each noise component has unit variance. We see that the performance of Detector 1 depends on the choice of $P_{\rm FA}$: For smaller $M$, $P_{\rm FA}$ should be chosen larger, whereas for larger $M$, a smaller $P_{\rm FA}$ performs better. Detector 2 does this trade-off automatically and performs very well even for very small sample sizes. All other approaches (including Detector 3) still perform quite well but require larger sample support. 

The picture completely changes when we have colored rather than white noise. We now generate the noise from a spatially varying moving average (MA) process of order $3$ with coefficients $[\frac{1}{\sqrt{3}}, \frac{1}{\sqrt{3}}, \frac{1}{\sqrt{3}}]$. Before the spatial averaging, the noise components have variance $1/3$. It can be seen in Fig. \ref{fig_sim2} that methods that select $r_x$ and $r_y$ separately from $d$ completely fail. This is because a single-channel technique such as SEV cannot distinguish between signal and noise eigenvalues if the noise is colored. The performance of our detectors, on the other hand, is actually improved particularly for very small sample sizes. 

\begin{figure}
\begin{center}
\includegraphics{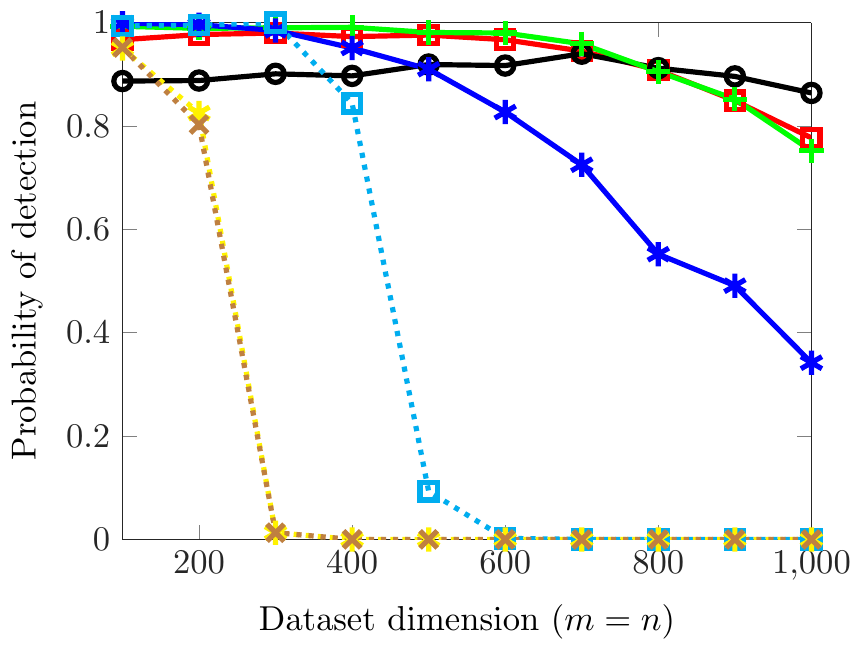}
\caption{Same setup as in Fig. \ref{fig_sim1} but with varying dimensions $m = n$ and fixed sample size $M = 100$. For the meaning of the colored markers, please refer to the legend of Fig. \ref{fig_sim1}.}
\label{fig_sim3}
\end{center}
\end{figure}

In Fig. \ref{fig_sim3}, we reconsider the white noise case but with varying dimensions $m = n$ and fixed sample size $M = 100$. When the noise is independent in space and time, increasing the ratio of the data dimensions $m$, $n$ to the number of samples $M$ shrinks the signal-subspace \cite{NadakuditiSPT0708}, which worsens the detection performance. We note, however, that the decrease in performance affects the SEV + X techniques much more than our detectors. Indeed, Detector 2 again shows a very reliable performance even for large dimensions. The main reason behind this effect is that the SEV technique is designed to keep all the signal components (i.e., correlated {\em and} independent components) whereas our detectors aim to eliminate weaker independent components in the PCA step. The presence of independent components deteriorates the detection performance of the subsequent CCA step.

{

\begin{figure}
\begin{center}
\includegraphics{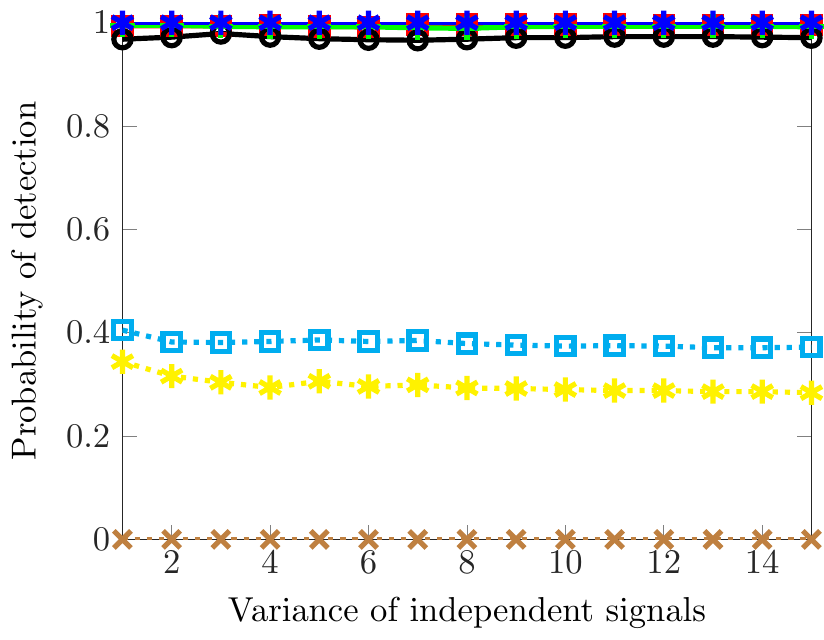}
\caption{Effect of the independent signals' variance on performance. Settings: $d = 7$ correlated signals with variance 10 and correlation coefficients ($0.92$, $0.9$, $0.88$, $0.85$, $0.83$, $0.8$, $0.75$), $f_x = f_y = 2$ independent signals of varying variance, $m =n = 80$, $M = 150$, colored AR(1) noise with coefficient $0.65$. For the meaning of the colored markers, please refer to the legend of Fig. \ref{fig_sim1}.}
\label{fig_new_scen_1}
\end{center}
\end{figure}

\begin{figure}
\begin{center}
\includegraphics{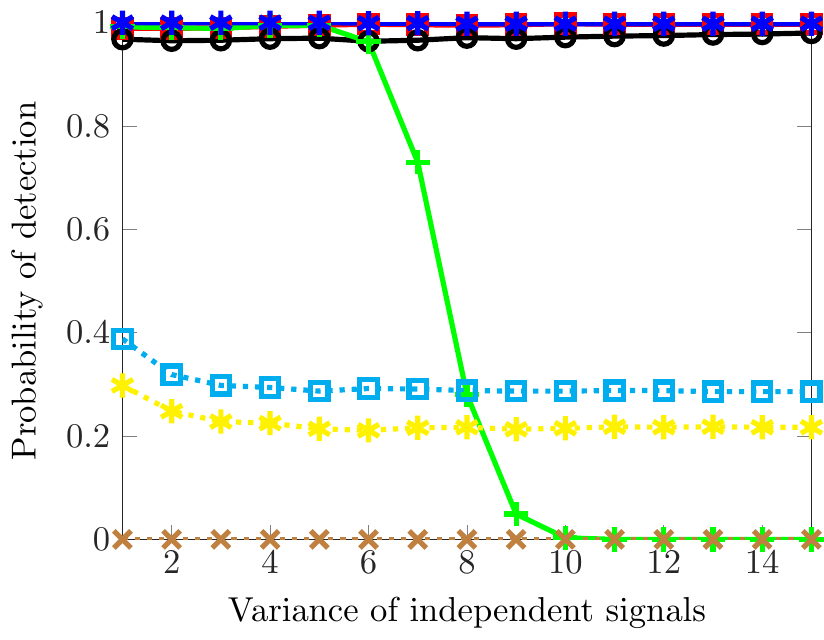}
\caption{Same setting as in Fig. \ref{fig_new_scen_1}, except that now there are $f_x = f_y = 4$ independent signals of varying variance. For the meaning of the colored markers, please refer to the legend of Fig. \ref{fig_sim1}.}
\label{fig_new_scen_2}
\end{center}
\end{figure}

So far, we have looked at a case where the correlated signals are {\em stronger} than the independent signals. We now investigate what happens when the correlated signals are {\em weaker} than some or all of the independent signals. First consider a scenario with 2 independent signals of varying variance and 7 correlated signals of variance 10. Figure \ref{fig_new_scen_1} shows the probability of detection as a function of the independent signals' variance. We see that the variance has only little effect on the performance of all techniques. 

Now we increase the number of independent signals to 4, leaving all other settings unchanged. The most dramatic effect that can be observed in Fig. \ref{fig_new_scen_2} is the failure of Detector 2 once the independent signals reach a variance close to the correlated signals' variance. This may be explained as follows. Detector 2 sets its threshold based on MDL, which generally does not overestimate the number of correlated signals, but may {\em underestimate} it if the sample size is not sufficiently large compared to the system dimension (i.e., the PCA rank). In the case shown in Fig. \ref{fig_new_scen_2}, there are 7 correlated signals and 4 independent signals. Once the independent signals become as strong as or stronger than the correlated signals, this leads to an optimum PCA rank of 11. As the number of samples $M = 150$ is not significantly larger than 11, MDL starts to underestimate the model order. This affects Detector 2 more severely than Detector 3 because Detector 3 will always return a model as large as, but possibly larger than, Detector 2 (see the discussion in Section \ref{sec_minMDL}). 

\begin{figure}
\begin{center}
\includegraphics{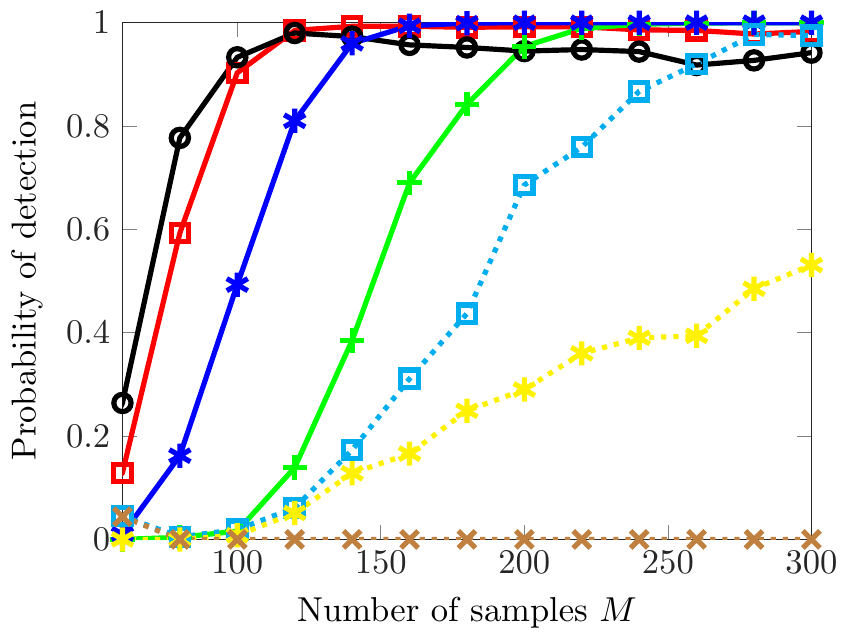}
\caption{Same setting as in Fig. \ref{fig_new_scen_1}, except that there are $d = 5$ correlated signals with variance 8, and $f_x = f_y = 7$ independent signals, two of which have variance 12 and 5 of which have variance 3. Performance as a function of number of samples $M$.  For the meaning of the colored markers, please refer to the legend of Fig. \ref{fig_sim1}.}
\label{fig_new_scen_3}
\end{center}
\end{figure}

This explanation can be validated by investigating the effect of the number of samples in a scenario where there are strong independent signals. We now consider a case with $d =5$ correlated signals with variance 8, and $f_x = f_y = 7$ independent signals, two of which have variance 12 and 5 of which have variance 3. In Fig. \ref{fig_new_scen_3}, we look at the performance as a function of the number of samples $M$. It can be observed that, among our three detectors, Detector 2 needs the largest number of samples for satisfactory performance, followed by Detector 3. The lesson that can be learned here is that in the presence of strong independent signals, Detector 1 should be preferred if only a very small number of samples are available.

\begin{figure}
\begin{center}
\includegraphics{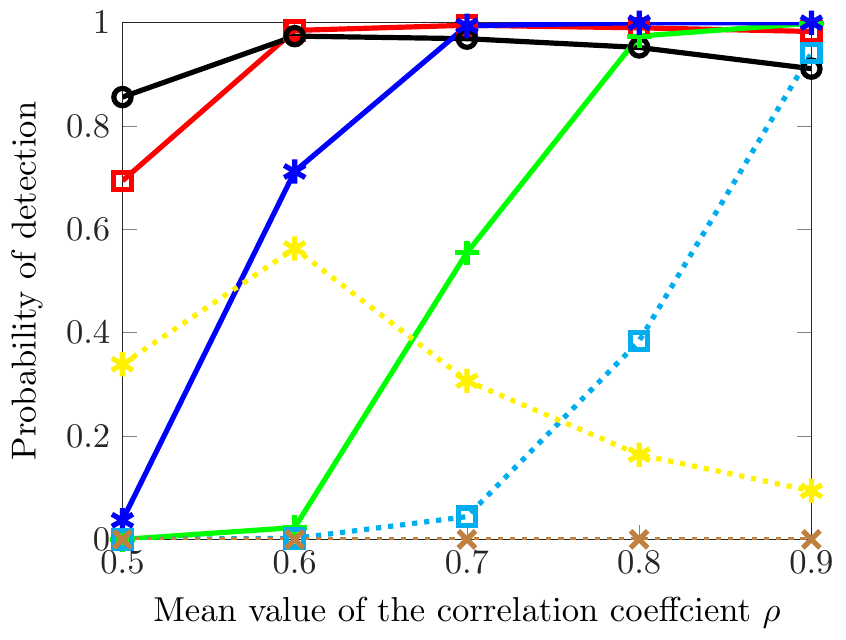}
\caption{Performance as a function of the mean correlation coefficient $\rho$. Settings: $m = n = 100$, $M = 180$, $d = 5$ correlated signals with correlation coefficients drawn from a uniform distribution between $[\rho - 0.05, \rho + 0.05]$, $f_x = f_y = 2$ stronger independent signals, AR(1) noise with coefficient $0.65$. For the meaning of the colored markers, please refer to the legend of Fig. \ref{fig_sim1}.}
\label{fig_new_scen_4}
\end{center}
\end{figure}

Let us now investigate the effect that the value of the correlation coefficients among the correlated signals have. Here we consider a scenario with $d = 5$ correlated signals with variance 8, and $f_x = f_y = 2$ stronger independent signals of variance 10. In Fig. \ref{fig_new_scen_4}, we plot the performance as function of $\rho$. The correlation coefficients for the 5 correlated signals are drawn from a uniform distribution between $[\rho - 0.05, \rho + 0.05]$. As expected, stronger correlation leads to better performance. Since the independent signals are stronger than the correlated signals, Detector 1 outperforms Detector 3, which in turn outperforms Detector 2. All of our detectors outperform the competition. }

\begin{figure}
\begin{center}
\includegraphics{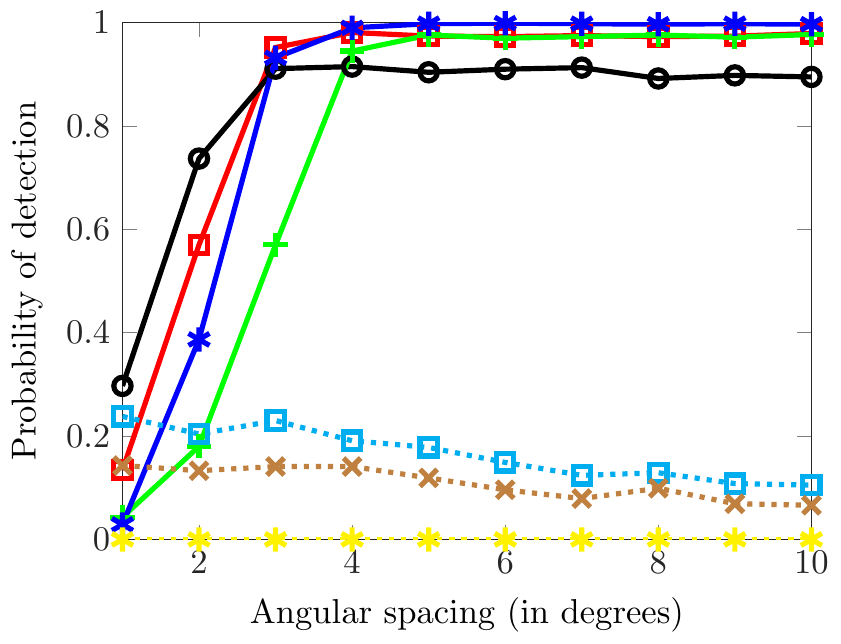}
\caption{Array processing toy example to illustrate the effect of ill-conditioned mixing matrices. For the meaning of the colored markers, please refer to the legend of Fig. \ref{fig_sim1}.}
\label{fig_sim4}
\end{center}
\end{figure}

In our last example we examine an array processing toy application to see what happens if the mixing matrices $\AB_x$ and $\AB_y$ become ill-conditioned. We consider two spatially separated uniform linear arrays (ULAs) with 40 sensors (i.e., $m = n = 40$) and inter-sensor spacing of $\lambda/2$, which take $M = 60$ samples. There are $5$ fixed point-sources in the far-field emitting narrow-band Gaussian signals at wavelength $\lambda$, which impinge upon ULA~1 at angles $[\theta_{x,1},\theta_{x,2},\ldots,\theta_{x,5}] = [20^\circ, 20^\circ + \delta, ..., 20^\circ + 4\delta]$. Similarly, $6$ such signals impinge upon ULA~2 at angles $[\theta_{y,1},\theta_{y,2},\ldots,\theta_{y,6}] = [50^\circ, 50^\circ + \delta, ..., 50^\circ + 5\delta]$. Two of these signals are correlated between ULAs 1 and 2 (i.e., $d = 2$, $f_x = 3$, $f_y = 4$) with correlation coefficients $0.8$ and $0.7$. The correlated signals each have variance $5$ and the independent signals each have variance $1.5$. The noise is colored and generated as in the setup for Fig.~\ref{fig_sim2}. 

With these assumptions, the $i$th column of $\AB_x$ is $[1, e^{j\frac{\pi}{2} \sin\theta_{x,i}}, ... , e^{j\frac{\pi}{2} (n-1) \sin\theta_{x,i}}]^T$, $i = 1, \ldots, 5$, and the $i$th column of $\AB_y$ is $[1, e^{j\frac{\pi}{2} \sin\theta_{y,i}},\ldots, e^{j\frac{\pi}{2} (m-1)\sin\theta_{y,i}}]^T, i=1,\ldots,6$. As the angular spacing $\delta$ decreases, the mixing matrices ${\bf A}_x$ and ${\bf A}_y$ become more ill-conditioned. Figure \ref{fig_sim4} shows the performance of all detectors for angular spacing $\delta$ ranging from $1^\circ$ to $10^\circ$. We can see that due to the presence of colored noise, all SEV + X methods fail irrespective of $\delta$. Our detectors, on the other hand, are able to provide very good detection rates from $\delta=4^\circ$ onward. Detectors 1 and 3 provide the best results for small $\delta$.

\section{Conclusions}

PCA-CCA is a common approach to the analysis of correlation between two data sets when there is only small sample support. In the past, selecting the ranks of the PCA steps and identifying the number of correlated signals was often done by ad-hoc rules or based on experience. In this paper, we have presented a systematic approach to the joint order selection of PCA ranks and number of correlated signals, based on a GLRT and information-theoretic criteria. Simulation results have shown that the techniques perform very well for extremely sample-poor scenarios in particular in the presence of colored noise. Of course, it is important to remember that there is no free lunch. While we do not need many samples compared to the dimensions of the data sets, the techniques do require the number of samples to be sufficiently greater than the sum of the numbers of correlated signals and stronger independent signals (i.e., variance larger than the correlated signals).

\appendices
\section{Effect of PCA on estimated canonical correlations}
\label{app:interlace}

\begin{lemma}\label{lemma_interlace}
The estimated canonical correlation coefficients increase with increasing PCA ranks $r_x$ and $r_y$: $\hat{k}_i(\tilde{r}_1,\tilde{r}_2) \geq \hat{k}_i(r_x,r_y) ,i=1,\ldots,\min(r_x,r_y)$, for $1\leq r_x < \tilde{r}_1$ and $1\leq r_y < \tilde{r}_2$.
\end{lemma}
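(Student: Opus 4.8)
The plan is to reduce the statement, via the identity (\ref{equ:VxVy}), to a monotonicity property of singular values under extension of a matrix, and then to obtain that property from Cauchy's interlacing theorem applied to suitable Gram matrices. Recall from (\ref{equ:VxVy}) that $\hat{k}_i(r_x,r_y)$ is the $i$th largest singular value of $\mathbf{M}(r_x,r_y) = \VB_x^H(:,1:r_x)\VB_y(:,1:r_y)$. The crucial structural observation is that $\mathbf{M}(r_x,r_y)$ is exactly the leading $r_x\times r_y$ block of $\mathbf{M}(\tilde{r}_1,\tilde{r}_2)$: enlarging $r_x$ appends rows and enlarging $r_y$ appends columns, while the already-present entries are unchanged. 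Thus the lemma is equivalent to the assertion that deleting a trailing block of rows and columns cannot increase any singular value.

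I would prove this one rank at a time, writing $\hat{k}_i(r_x,r_y) \le \hat{k}_i(\tilde{r}_1,r_y) \le \hat{k}_i(\tilde{r}_1,\tilde{r}_2)$. For the first inequality (fix $r_y$, raise $r_x$) I pass to the Hermitian Gram matrix $\mathbf{M}(r_x,r_y)\mathbf{M}(r_x,r_y)^H$, whose eigenvalues are the squared singular values $\hat{k}_i^2$. Writing $\mathbf{P}_y = \VB_y(:,1:r_y)\VB_y^H(:,1:r_y)$ for the orthogonal projector onto the retained $y$-subspace, its $(j,k)$ entry is $\VB_x(:,j)^H \mathbf{P}_y \VB_x(:,k)$; hence $\mathbf{M}(r_x,r_y)\mathbf{M}(r_x,r_y)^H$ is precisely the leading $r_x\times r_x$ principal submatrix of $\mathbf{M}(\tilde{r}_1,r_y)\mathbf{M}(\tilde{r}_1,r_y)^H$. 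Cauchy's interlacing theorem then gives that the $i$th largest eigenvalue of the larger matrix is at least the $i$th largest eigenvalue of the principal submatrix for every $i \le r_x$, and taking square roots yields $\hat{k}_i(\tilde{r}_1,r_y) \ge \hat{k}_i(r_x,r_y)$.

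The second inequality (fix $\tilde{r}_1$, raise $r_y$) follows by the symmetric argument applied to $\mathbf{M}^H\mathbf{M}$ instead of $\mathbf{M}\mathbf{M}^H$: with $\mathbf{P}_x = \VB_x(:,1:\tilde{r}_1)\VB_x^H(:,1:\tilde{r}_1)$, the matrix $\mathbf{M}(\tilde{r}_1,r_y)^H\mathbf{M}(\tilde{r}_1,r_y)$ has $(j,k)$ entry $\VB_y(:,j)^H\mathbf{P}_x\VB_y(:,k)$ and is the leading $r_y\times r_y$ principal submatrix of $\mathbf{M}(\tilde{r}_1,\tilde{r}_2)^H\mathbf{M}(\tilde{r}_1,\tilde{r}_2)$, so Cauchy interlacing again gives the ordering. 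Composing the two steps establishes $\hat{k}_i(\tilde{r}_1,\tilde{r}_2)\ge \hat{k}_i(r_x,r_y)$ for all $i\le \min(r_x,r_y)$, which is the claim.

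The only genuinely load-bearing step is the verification that truncating a PCA rank removes a \emph{trailing} block and therefore leaves a \emph{leading principal} submatrix of the Gram matrix---this is what licenses Cauchy interlacing, and it hinges on the columns of $\VB_x$ and $\VB_y$ being orthonormal, so that each Gram entry depends only on the retained columns and the relevant projector. After that, Cauchy's theorem is invoked as a black box. A minor point worth flagging is that the two rank increases cannot be collapsed into a single eigenvalue comparison, since the singular values are eigenvalues of $\mathbf{M}\mathbf{M}^H$ and of $\mathbf{M}^H\mathbf{M}$ but not of one fixed Hermitian matrix as both ranks vary; handling $r_x$ via $\mathbf{M}\mathbf{M}^H$ and $r_y$ via $\mathbf{M}^H\mathbf{M}$ (equivalently, exploiting the Loewner monotonicity $\mathbf{P}_y(r_y)\preceq\mathbf{P}_y(\tilde{r}_2)$) cleanly sidesteps this.
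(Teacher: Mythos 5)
Your proof is correct, and its skeleton matches the paper's: both reduce the claim, via (\ref{equ:VxVy}), to the observation that the low-rank coherence matrix $\VB_x^H(:,1:r_x)\VB_y(:,1:r_y)$ is a leading sub-block of the high-rank one, and both then compare eigenvalues of Gram matrices. The one substantive difference lies in how the column ($r_y$) augmentation is handled. The paper stays entirely inside left Gram matrices: it partitions $\mathbf{G} = \VB_x^H(:,1:\tilde{r}_1)\VB_y(:,1:\tilde{r}_2)$ into blocks, uses Cauchy interlacing once for the row truncation, $\lambda_i(\mathbf{G}\mathbf{G}^H) \geq \lambda_i(\mathbf{G}_1\mathbf{G}_1^H)$, and then disposes of the column truncation with the \emph{Weyl inequality}, writing $\mathbf{G}_1\mathbf{G}_1^H = \mathbf{G}_{1,1}\mathbf{G}_{1,1}^H + \mathbf{G}_{1,2}\mathbf{G}_{1,2}^H$ and dropping the positive semidefinite term $\mathbf{G}_{1,2}\mathbf{G}_{1,2}^H$. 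You instead apply Cauchy interlacing a second time, switching to the right Gram matrix $\mathbf{M}^H\mathbf{M}$ so that the column truncation also appears as a leading principal submatrix---precisely the maneuver you flag as necessary because the two rank increases cannot be expressed within one fixed Hermitian matrix; the paper's Weyl step is the alternative way around that same obstruction. The two devices are interchangeable here: your version uses a single theorem throughout and makes the intermediate quantity an honest set of canonical correlations $\hat{k}_i(\tilde{r}_1,r_y)$, at the modest cost of invoking the equality of the nonzero spectra of $\mathbf{M}\mathbf{M}^H$ and $\mathbf{M}^H\mathbf{M}$; the paper's version avoids that bookkeeping but needs two distinct inequalities (Cauchy plus Weyl). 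Note also that the paper's chain passes through the opposite corner, $(\tilde{r}_1,\tilde{r}_2) \to (r_x,\tilde{r}_2) \to (r_x,r_y)$, since $\lambda_i(\mathbf{G}_1\mathbf{G}_1^H) = \hat{k}_i^2(r_x,\tilde{r}_2)$, whereas yours goes $(r_x,r_y) \to (\tilde{r}_1,r_y) \to (\tilde{r}_1,\tilde{r}_2)$; either way, both arguments actually establish the slightly stronger fact that the $\hat{k}_i$ are monotone in each PCA rank separately.
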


\begin{proof}
Define the following matrices:
 \begin{eqnarray}
 {\bf G} &=& {\bf V}^H_x(:,1:\tilde{r}_1) {\bf V}_y(:,1:\tilde{r}_2) = \left[\begin{array}{c} {\bf G}_1 \\ {\bf G}_2\end{array}\right], \nonumber\\
 {\bf G}_{1} &=& {\bf V}^H_x(:,1:r_x) {\bf V}_y(:,1:\tilde{r}_2) = \left[\begin{array}{cc} {\bf G}_{1,1} & {\bf G}_{1,2}\end{array}\right], \nonumber\\
 {\bf G}_{2} &=& {\bf V}^H_x(:,r_x+1:\tilde{r}_1) {\bf V}_y(:,1:\tilde{r}_2), \nonumber\\
 {\bf G}_{1,1} &=& {\bf V}^H_x(:,1:r_x) {\bf V}_y(:,1:r_y), \nonumber\\
 {\bf G}_{1,2} &=& {\bf V}^H_x(:,1:r_x) {\bf V}_y(:,r_y+1:\tilde{r}_2). \nonumber
 \end{eqnarray}
 
According to the Cauchy interlacing theorem, we have
\[\lambda_i\left({\bf GG}^H\right)=
\lambda_i\left(\left[
\begin{array}{cc}
{\bf G}_{1}{\bf G}_{1}^H & {\bf G}_{1}{\bf G}_{2}^H \\
{\bf G}_{2}{\bf G}_{1}^H & {\bf G}_{2}{\bf G}_{2}^H
\end{array}\right]\right)\geq
\lambda_i\left({\bf G}_{1}{\bf G}_{1}^H\right)\] for $i=1,\ldots,r_x$, where $\lambda_i(\cdot)$ represents the $i$th largest eigenvalue.
Furthermore, as a result of the Weyl inequality, we also have
$\lambda_i\left({\bf G}_{1}{\bf G}_{1}^H\right)=\lambda_i\left({\bf G}_{1,1}{\bf G}_{1,1}^H+{\bf G}_{1,2}{\bf G}_{1,2}^H\right)\geq \lambda_i\left({\bf G}_{1,1}{\bf G}_{1,1}^H\right)$.
Together with first inequality, this yields
$\lambda_i\left({\bf GG}^H\right) \geq \lambda_i\left({\bf G}_{1,1}{\bf G}_{1,1}^H\right)$.
As ${\bf V}_x$ and ${\bf V}_y$ represent the matrices of right-singular vectors of $\XB$ and $\YB$, respectively, it follows that the squared sample canonical correlation coefficient $\hat{k}_i^2(\tilde{r}_1,\tilde{r}_2) = \lambda_i\left({\bf GG}^H\right)$ is greater than or equal to the squared sample canonical correlation coefficient  $\hat{k}_i^2(r_x,r_y) = \lambda_i\left({\bf G}_{1,1}{\bf G}_{1,1}^H\right)$.
\end{proof}

\bibliographystyle{IEEEtran}
\bibliography{refs}

\end{document}